\theoremstyle{plain}
\newtheorem{theorem}{Theorem}
\newtheorem{definition}[theorem]{Definition}
\newtheorem{lemma}[theorem]{Lemma}
\theoremstyle{definition}
\newtheorem{example}[theorem]{Example}
\newtheorem{remark}[theorem]{Remark}
\numberwithin{equation}{section}
\numberwithin{theorem}{section}
\DeclareMathOperator{\Int}{int}
\DeclareMathOperator{\bd}{bd}
\DeclareMathOperator{\diag}{diag}
\newcommand{\bs}[1]{{\boldsymbol{#1}}}
\newcommand{\pdt}{\partial_t}
\newcommand{\R}{\mathbf{R}}
\newcommand{\I}{\mathrm{i}}
\newcommand{\D}{\,\mathrm{d}}
\newcommand{\DP}[2]{\langle#1,#2\rangle}
\begin{document}
%The heading
\title{{Solutions with a bounded support promote permanence of a distributed replicator equation}}

\author{Alexander S. Bratus$^{1,2}$, Vladimir P. Posvyanskii$^{2}$\,, Artem S. Novozhilov$^{{3},}$\footnote{Corresponding author: artem.novozhilov@ndsu.edu}\\[3mm]
\textit{\normalsize $^\textrm{\emph{1}}$Faculty of Computational Mathematics and Cybernetics,}\\[-1mm]
\textit{\normalsize Lomonosov Moscow State University, Moscow 119992, Russia}\\[2mm]
\textit{\normalsize $^\textrm{\emph{2}}$Applied Mathematics--1, Moscow State University of Railway Engineering,}\\[-1mm]\textit{\normalsize Moscow 127994, Russia}\\[2mm]
\textit{\normalsize $^\textrm{\emph{3}}$Department of Mathematics, North Dakota State University, Fargo, ND 58108, USA}}

\date{}

\maketitle

%Abstract goes here
\begin{abstract}
The now classical replicator equation describes a wide variety of biological phenomena, including those in theoretical genetics, evolutionary game theory, or in the theories of the origin of life. Among other questions, the permanence of the replicator equation is well studied in the local, well-mixed case. Inasmuch as the spatial heterogeneities are key to understanding the species coexistence at least in some cases, it is important to supplement the classical theory of the non-distributed replicator equation with a spatially explicit framework. One possible approach, motivated by the porous medium equation, is introduced. It is shown that the solutions to the spatially heterogeneous replicator equation may evolve to equilibrium states that have a bounded support, and, moreover, that  these solutions are of paramount importance for the overall system permanence, which is shown to be a more commonplace phenomenon for the spatially explicit equation if compared with the local model.

\paragraph{\small Keywords:} Replicator equation, reaction--diffusion systems, stability, permanence, uniform persistence
\paragraph{\small AMS Subject Classification:} Primary:  35K57, 35B35, 91A22; Secondary: 92D25
\end{abstract}

\section{Introduction}
The species coexistence is arguably the most important characteristics of a biological (ecological, chemical, etc) system. The clear understanding of this trivial fact led to highly nontrivial theories of mathematical \textit{permanence} \cite{hofbauer1998ega} or \textit{uniform persistence} \cite{smith2011dynamical}, which provide the rigorous framework for the verbal description that ``The presence or the absence of a species is sometimes the point of interest regardless of some variation in their numbers'' \cite{lewontin1969meaning}.

A significant number of results for the species permanence, which mathematically means that the solutions are separated from both zero and infinity, were obtained for the so-called \textit{replicator equation} \cite{hofbauer2003egd,hofbauer1998ega,schuster1983rd}, in the classical form
\begin{equation}\label{eq:0:1}
    \dot w_i=w_i\Bigl(\bigl(\bs{Aw}\bigr)_i-\DP{\bs w}{\bs{Aw}}\Bigr),\quad i=1,\ldots,n,
\end{equation}
where $\bs w=(w_,\ldots,w_n)$ is the vector of frequencies of interacting species, real matrix $\bs A$ describes the interactions in terms of the catalyzing rates, $(\bs u)_i$ is the $i$-th entry of the vector $\bs u$, and $\DP{\cdot}{\cdot}$ is the usual dot product in $\R^n$. Note that if $S_n$ is the standard simplex in $\R^n$ then, due to the normalization term $\DP{\bs w}{\bs{Aw}}$, $\bs w(t)\in S_n$ for any time moment, assuming $\bs w(0)\in S_n$, that is the simplex is invariant with respect to the flow defined by \eqref{eq:0:1}.

Problem \eqref{eq:0:1} is a system of ordinary differential equations (ODE) and therefore describes the dynamics of a well mixed system. It is a common wisdom that the \textit{spatial structure} mediates coexistence \cite{cantrell2003spatial,dieckmann2000}, and therefore we face an important problem to extend, compare, and generalize the results obtained for the local system \eqref{eq:0:1} to the case when we include the spatial variables in our equations. The first fact that should be clearly understood in this respect is that there are different and non-equivalent ways to add the spatial heterogeneity to the model \eqref{eq:0:1}, that is, the results of analysis are \textit{model dependent}.

One way to model the spatial structure is to assume that the individuals are associated with vertices of some graph, and two individuals interact if their vertices are connected by an edge. This approach led to the evolutionary games on graphs (e.g., \cite{lieberman2005evolutionary}). Alternatively, it is also possible to assume that the whole system composed of a number of local populations, within which the infractions are random, and some dispersal rates between the patches are specified (e.g., \cite{schreiber2013spatial}). It is important to remark that in both of these cases the dynamics of the structured populations is different from that of the underlying well-mixed model; in particular, an important phenomenon of cooperation can be maintained in structured populations, opposite to the evolutionary outcomes in local, randomly mixing populations.

One of the most popular ways to add the spatial heterogeneity to the local ODE models is to consider a corresponding reaction-diffusion system, when the Laplace operator, describing the microscopic Browning motion, is added to the rates of the local model. Note, however, that it is incorrect to add the Laplace operator directly to system \eqref{eq:0:1} (see \cite{novozhilov2012reaction} for a review, and \cite{bratus2006ssc,bratus2011,bratus2009existence,cressman1987density,cressman1997sad,hutson1995sst,hutson1992travelling,Vickers1991,vickers1989spa,weinberger1991ssa} for additional details and analysis of special cases). A natural approach to add the spatial heterogeneity through the reaction-diffusion mechanism  to the replicator equation \eqref{eq:0:1} is to start with the equation for the absolute sizes, and not for the frequencies as in \eqref{eq:0:1}, add the Laplace operator, and after this make the change of variables to reduce the system to the problem on simplex (which becomes integral in this case, see below for the exact definition). This idea, which is a mathematical manifestation of the \textit{global regulation} was originally used for the quasispecies model \cite{weinberger1991ssa}, see also \cite{bratus2016diffusive} for more general results, and in \cite{bratus2006ssc} for the hypercycle model; subsequent analysis of the general reaction-diffusion replicator equation was performed in \cite{bratus2011,bratus2009existence}. One of the conclusions that we obtained in the cited works is that the behavior of solutions of the reaction-diffusion replicator equation obtained through the principle of global regulation is qualitatively similar to the solutions of the local model \eqref{eq:0:1}, and in particular the set of all the matrices $\bs A$, for which the system is permanent, is no bigger than this set for model \eqref{eq:0:1}. At the same time the local model \eqref{eq:0:1} is not adequate at least in some cases, as the following example shows.
\begin{example}\label{ex:0:1}Consider an in-vitro system of cooperative RNA replicators, analyzed in \cite{vaidya2012spontaneous}, which can be schematically represented as in Fig. \ref{fig:0:1}.
\begin{figure}[!bh]
\centering
\includegraphics[width=0.45\textwidth]{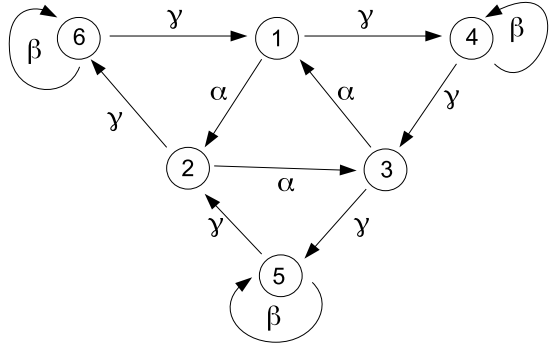}
\caption{A schematic representation of the catalytic network of 6 macromolecules \cite{vaidya2012spontaneous}}\label{fig:0:1}
\end{figure}

It was shown that this particular network of macromolecules is capable of sustaining self-replication (that is, it is permanent, and none of the macromolecules went extinct in experiments). A naive modeling approach would be to consider an interaction matrix
$$
\bs A=    \left[
\begin{array}{cccccc}
 0 & 0 & \alpha  & 0 & 0 & \gamma  \\
 \alpha  & 0 & 0 & 0 & \gamma  & 0 \\
 0 & \alpha  & 0 & \gamma  & 0 & 0 \\
 \gamma  & 0 & 0 & \beta  & 0 & 0 \\
 0 & 0 & \gamma  & 0 & \beta  & 0 \\
 0 & \gamma  & 0 & 0 & 0 & \beta  \\
\end{array}
\right]
$$
and the corresponding replicator equation \eqref{eq:0:1}. It can be shown, however (see \cite{novozhilov2013replicator} for additional details), that this system is not permanent contradicting therefore the experimental results.
\end{example}

Example \ref{ex:0:1} prompts for a modification of the local replicator equation \eqref{eq:0:1} such that the model solution would reflect the permanent nature of the underlying cooperative network. To this end, we suggested in \cite{novozhilov2013replicator} another way to arrive to a reaction-diffusion equation, motivated in significant part by the diffusion equation in the porous medium \cite{barenblatt1989theory,knabner2003numerical}. In particular, for the \textit{absolute sizes} $\bs N=(N_1,\ldots,N_n)$ we can write
$$
\frac{1}{\phi_i(\bs N)}\partial_t N_i=(\bs{Aw})_i+d_i\Delta N_i,\quad i=1,\ldots,n,
$$
where the functions $\phi_i$ specify how the densities affect the diffusion rates, and $d_i$ are some given parameters. In particular, for the simplest case $\phi_i(\bs N)=N_i$ we obtain a quasilinear reaction-diffusion PDE
\begin{equation}\label{eq:0:2}
\partial_t N_i=(\bs{Aw})_iN_i+d_iN_i\Delta N_i,\quad i=1,\ldots,n.
\end{equation}
Equations of this type received much less attention in the literature, compare to the classical reaction-diffusion systems \cite{cantrell2003spatial}, see, e.g., \cite{postnikov2007continuum} for one example, where a simple model was used to model the spread of infection in a population of individuals with low mobilities. Such equations, being quasilinear, pose significant mathematical challenges (e.g., \cite{aronson1986porous,bertsch1990discontinuous,dal1987degenerate}).

In \cite{novozhilov2013replicator} we performed a numerical and analytical analysis of the reaction-diffusion replicator equation, which is obtained from \eqref{eq:0:2} by switching to the vector of frequencies (see below for the exact expressions). In particular, we proved that for sufficiently large parameters $d_i>0$ the equilibria of the reaction--diffusion replicator equation are uniform and coincide with equilibria of \eqref{eq:0:1}, and, more importantly, identified the conditions when their stability properties coincide. Additionally, we found a sufficient condition that the distributed spatially heterogeneous system is permanent. This condition, however, turned out to be more restrictive than that for the local system. Our two most interesting observations were of a numerical nature. We found that 1) for a large number of examples of replicator equations including the spatial structure in the form of equation \eqref{eq:0:2} leads to system permanence even if the original local system does not demonstrate species coexistence, and 2) with time the solutions to the reaction-diffusion replicator equation tend to equilibrium solutions, whose support is only part of the domain in which we consider our problem. Both of these facts can be observed for the problem in Example \ref{ex:0:1}. Here is another simple example to support and illustrate our claims. This example also serves to motivate the subsequent analytical analysis.
\begin{example}\label{ex:0:2}
Consider the replicator equation with the matrix
$$
\bs A=\begin{bmatrix}
        1.1 & 1 \\
        1 & 0 \\
      \end{bmatrix}.
$$
For this example, a straightforward analysis of \eqref{eq:0:1} shows that $w_1(t)\to 1$and $w_2(t)\to 0$ as $t\to\infty$, and therefore the system is clearly not permanent (Fig. \ref{fig:0:2}a). If, however, we consider a reaction--diffusion replicator equation of the form \eqref{eq:0:2}, (the parameters are $d_1=1.1/\pi^2,\,d_2=0.5/\pi^2$), then the numerical experiments show that the spatial heterogeneity stabilizes the system, which becomes permanent (see Fig. \ref{fig:0:2}c,d). Note also that in the long run the solutions concentrate only on a proper subset of the spatial domain $\Omega=(0,1)$ (Fig. \ref{fig:0:2}c).
\begin{figure}[!t]
\centering
\includegraphics[width=0.75\textwidth]{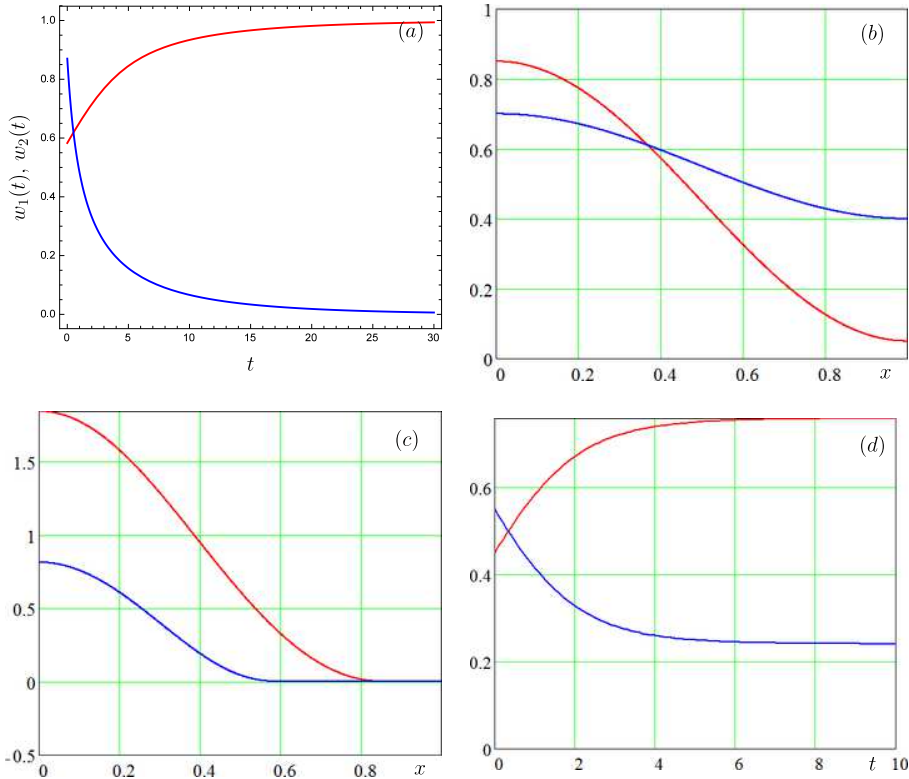}
\caption{Comparison of the local and distributed replicator systems with the same interaction matrix (Example \ref{ex:0:2}). $(a)$ Time dependent solutions to the local system. $(b)$ The initial conditions for the spatially distributed replicator equation. $(c)$ The limit of the solutions to the distributed replicator equation when $t\to\infty$. $(d)$ The time dependent behavior of the integrals of the  solutions in the distributed case}\label{fig:0:2}
\end{figure}
\end{example}
The goal of the present paper is to provide analytical analysis of both observations made in \cite{novozhilov2013replicator} and presented in Example \ref{ex:0:2}, i.e., to study analytically the appearance of solutions that are nonzero only on the part of the spatial domain $\Omega$, in the following we call such situations \textit{solutions with bounded support}, and provide sufficient conditions for the system permanence, which go beyond those valid for the local system. It turns out, as we show, that these two observations are inherently interconnected, and the solutions with bounded support play a significant role in the system permanence.

Before embarking on the analysis of our distributed replicator equation, it is important to mention that the usual definition of the system permanence (see, e.g., \cite{cantrell2003spatial}, the definition of the ``ecological permanence'') requires that $m\leq u_i(\bs x,t)\leq M$ for all $t>t_0$ and all $\bs x\in\Omega$, $m,M$ are given constants. In view of the special solutions we are about to study (Fig. \ref{fig:0:2}c) this definition is clearly not satisfactory for us. Therefore, in the rest of the paper the term ``permanence'' means that the integral value of the variables, i.e.,
$$
\int_\Omega u_i(\bs x,t)\D \bs x
$$
is separated from 0 for any time $t>t_0$ (see below precise Definition \ref{def1:1}).

The rest of the paper is organized as follows. In Section \ref{sec:2} we collect necessary notations and introduce a key definition of the \textit{resonant parameters}. In Section \ref{sec:3} we show how the solutions with bounded support naturally appear in our problems. Section \ref{sec:4} is devoted to the study of the connections between the solutions with bounded support and system permanence. In Appendix we prove some auxiliary facts.

\section{Model statement}\label{sec:2}
In this section we collect necessary notations and facts required for the subsequent analysis.

Let $\Omega$ be a bounded domain in $\R^m$, where $m$ is equal to 1, 2, or 3, depending on the required geometry, with a piecewise-smooth boundary $\Gamma$, $\bs A=(a_{ij})_{n\times n}$ a given real matrix, $\bs v=\bs v(\bs x,t)=\bigl(v_1(\bs x,t),\ldots,\bs v_n(\bs x,t)\bigr)$ a vector-function, $\bs x\in \Omega$, $t\geq 0$. We introduce the notations
\begin{align*}
\bigl(\bs A\bs v\bigr)_k&=\sum_{j=1}^n a_{kj}v_j(\bs x,t),\quad k=1,\ldots,n,\\
\DP{\bs{Av}}{\bs v}&=\sum_{j=1}^n\bigl(\bs{Av}\bigr)_jv_j(\bs x,t)=\sum_{j,k=1}^na_{kj}v_k(\bs x,t)v_j(\bs x,t).
\end{align*}

We consider the initial-boundary value problem (here $d_k>0,\,k=1,\ldots,n$ are parameters)
\begin{equation}\label{eq1:1}
    \pdt v_k=v_k\Bigl(\bigl(\bs A\bs v\bigr)_k-f^s(\bs v)+d_k\Delta v_k\Bigr),\quad k=1,\ldots,n,
\end{equation}
with the initial and boundary conditions
\begin{equation}\label{eq1:2}
    v_k(\bs x,0)=\varphi_k(\bs x),\quad \left.\frac{\partial v_k(\bs x,t)}{\partial\nu}\right|_{\bs x\in \Gamma}=0,\quad k=1,\ldots,n,
\end{equation}
where $\nu$ is the outward normal to $\Gamma$. In the system \eqref{eq1:1} we have
\begin{equation}\label{eq1:3}
    f^s(\bs v)=\int_{\Omega}\Bigl(\DP{\bs{Av}}{\bs v}-\sum_{k=1}^n d_k\|\nabla v_k\|^2\Bigr)\D \bs x.
\end{equation}
At this point we would like to remark that the system \eqref{eq1:1}--\eqref{eq1:3} is not a classical system of partial differential equations (PDE) since $f^s(\bs v)$ is a functional on the solutions to the problem  \eqref{eq1:1}--\eqref{eq1:2}.

From \eqref{eq1:1}--\eqref{eq1:3} it follows that
$$
\frac{\D}{\D t}\left(\sum_{k=1}^n\int_{\Omega} v_k(\bs x,t)\D \bs x\right)=0,
$$
which means that
\begin{equation}\label{eq1:4}
    \sum_{k=1}^n\int_{\Omega} v_k(\bs x,t)\D \bs x=\text{const}
\end{equation}
for any $t\geq 0$, where the constant can be chosen arbitrarily, we set it equal to one. This means that the integral simplex (see below) of the problem \eqref{eq1:1}--\eqref{eq1:3} is invariant.

Problem \eqref{eq1:1}--\eqref{eq1:2} is a spatially explicit replicator equation of the reaction-diffusion type and describes, for instance, the population dynamics of self-replicating and interacting molecules. In this interpretation $v_k(\bs x,t)$ is the relative density of the macromolecules of the $k$-th type relative to the total density in the domain $\Omega$ at the time moment $t$. The functional $f^s(\bs v)$ is hence the mean population fitness, and the expression $\bigl(\bs A\bs v\bigr)_k$ is the fitness of the $k$-th type of macromolecules at the point $\bs x\in\Omega$ at the time moment $t$.

From the physical meaning of the problem we conclude that the solutions to \eqref{eq1:1}--\eqref{eq1:3} should be sought among the set of non-negative functions $v_k(\bs x,t)\geq 0,\,\bs x\in\Omega,\,t\geq 0,k=1,\ldots,n$.
In the following we assume that the functions $v_k(\bs x,t),\,\bs x\in\Omega,\,t\geq 0,k=1,\ldots,n$ are smooth with respect to $t$ and, together with their derivatives with respect to $t$, belong to the Sobolev space  $W^{1,2}$, if $m=1$, and to $W^{2,2}$, if $m=2,3$, for each fixed $t>0$. Here $W^{s,2}$ is the space of square integrable functions in $\Omega$ together with their (weak) derivatives up to the order  $s$. We note that from the embedding theorems (e.g., \cite{evans_2010}) it follows that such functions coincide with continuous functions almost everywhere in $\Omega$.

Denote $\Omega_t=\Omega\times[0,\infty)$ and consider the set of functions $B(\Omega_t)$ with the norm
$$
\|z(\bs x,t)\|_{B(\Omega_t)}=\max_{t\geq 0}\left\{\|z(\bs x,t)\|_{W^{s,2}}+\|\pdt z(\bs x,t)\|_{W^{s,2}}\right\},\quad s=1,2.
$$
Denote $S_n(\Omega_t)$ the set of non-negative functions $\bs v(\bs x,t)=\bigl(v_1(\bs x,t),\ldots,v_n(\bs x,t)\bigr)$ such that $v_k(\bs x,t)\in B(\Omega_t)$ for all $k$ and satisfy \eqref{eq1:4} with the constant equal to one:
\begin{equation}\label{eq1:5}
    \sum_{k=1}^n\int_{\Omega} v_k(\bs x,t)\D \bs x=1.
\end{equation}
The set $S_n(\Omega_t)$ is the \textit{integral simplex} in the space of vector-functions, each component of which belongs to $B(\Omega_t)$.

The boundary elements (denoted $\bd S_n(\Omega_t)$) of the integral simplex $S_n(\Omega_t)$ are the vector-functions $\bs v(\bs x,t)=\bigl(v_1(\bs x,t),\ldots,v_n(\bs x,t)\bigr)$ such that for a non empty set of indexes  $K_0\subset \{1,\ldots,n\}$
$$
\overline{v}_k(t)=\int_\Omega v_k(\bs x,t)\D \bs x=0,\quad k\in K_0,
$$
and $\overline{v}_k(t)>0$, $k\notin K_0$, $t\geq 0$. Due to the simplex invariance
\begin{equation}\label{eq1:6}
    \sum_{k\notin K_0}\overline{v}_k(t)=1.
\end{equation}
The interior elements of the simplex $S_n(\Omega_t)$ (denoted $\Int S_n(\Omega_t)$) are the vector-functions $\bs v(\bs x,t)\in S_n(\Omega_t)$, for which
$$
\overline{v}_k(t)=\int_\Omega v_k(\bs x,t)\D \bs x>0,\quad k=1,\ldots,n,\quad t\geq 0.
$$
Furthermore, without loss of generality we assume that the measure of $\Omega$ is equal to 1, $|\Omega|=1$.

\begin{remark}Since $v_k(\bs x,t)\in W^{s,2},\,k=1,\ldots,n$ for $s=1$ or $s=2$ for each $t\geq 0$, then from the embedding theorems it follows that they coincide almost everywhere with continuous functions. Therefore, taking into account non-negativity of the functions, we conclude that if the mean integral value $\overline{v}_k(t)=0$ then $v_k(\bs x,t)=0$ almost everywhere in $\Omega$. Therefore the set $\bd S_n(\Omega_t)$ consists of vector-fucntions for which
$$
v_k(\bs x,t)=0,\quad k\in K_0,
$$
and also equality \eqref{eq1:6} holds.
\end{remark}

We consider weak solutions to \eqref{eq1:1}--\eqref{eq1:3}. Vector function $\bs v(\bs x,t)\in S_n(\Omega_t)$ is a weak solution if the following integral identity holds:
\begin{align*}
\int_0^\infty\int_\Omega \pdt v_k(\bs x,t)\eta(\bs x,t)\D \bs x\D t&=\int_0^\infty\int_\Omega v_k(\bs x,t)\Bigl(\bigl(\bs{Av}\bigr)_k-f^s(\bs v)\Bigr)\eta(\bs x,t)\D \bs x\D t\\
&-d_k\int_0^\infty\int_\Omega \DP{\nabla v_k(\bs x,t)}{\nabla \eta(\bs x,t)}\D \bs x\D t,
\end{align*}
for each function $\eta(\bs x,t)$, which for $\bs x\in\Omega$ is differential with respect to $t$ has a compact support for each fixed $t\in[0,\infty)$, and also for any $t\geq 0$ belongs to $W^{s,2}(\Omega)$ for $s=1$ or $s=2$.

Together with the problem \eqref{eq1:1}--\eqref{eq1:3} we also consider a system of ordinary differential equations which can be obtained formally from the original one when $d_k\to 0$:
\begin{equation}\label{eq1:7}
    \dot w_k=w_k\Bigl(\bigl(\bs{Aw}\bigr)_k-f^l(\bs w)\Bigr),\quad k=1,\ldots n,
\end{equation}
with the initial conditions
$$
w_k(0)=w_k^0,\quad k=1,\ldots,n.
$$
Here
$$
f^l(\bs w)=\DP{\bs{Aw}}{\bs w}=\sum_{i,j=1}^n a_{ij}w_iw_j.
$$
Problem \eqref{eq1:7} is considered on the set on non-negative vector-functions $\bs w(t)=\bigl(w_1(t),\ldots,w_n(t)\bigr)$, which for each time moment $t$ belong to the standard simples $S_n$, i.e.,
\begin{equation}\label{eq1:8}
    \sum_{k=1}^n w_k(t)=1,\quad w_k(t)\geq0,\quad k=1,\ldots,n.
\end{equation}
In analogy with the boundary and interior sets for the integral simplex we denote the boundary $\bd S_n$ (there is at least one $k$ such that $w_k(t)=0$) and the interior set $\Int S_n$ (for all $k=1,\ldots, n$, $w_k(t)>0$ for all  $t$). The sets $\bd S_n$ and $\Int S_n$ are invariant.
\begin{remark}\label{remark:2} For each element $\bs v(\bs x,t)\in S_n(\Omega_t)$ we can identify the element $\bs w(t)\in S_n$, if we set $\bs w(t)=\overline{\bs v}(t)$, where, and everywhere else in the text, the bar denotes the mean integral value through $\Omega$:
$$
\overline{v}_k(t)=\int_\Omega v_k(\bs x,t)\D \bs x,\quad k=1,\ldots,n,\quad t\geq 0.
$$
\end{remark}

Due to the reasons discussed informally in Introduction we use the following
\begin{definition}\label{def1:1}The system \eqref{eq1:1}--\eqref{eq1:3} is called permanent, if there are such $\varepsilon_0>0$ and $\delta_0>0$ that for all components $\bs v(\bs x,t)\in S_n(\Omega_t)$ of the system \eqref{eq1:1}--\eqref{eq1:3} it holds that
$$
\liminf_{t\to\infty}\|v_k(\bs x,t)\|\geq \varepsilon_0>0,\quad k=1,\ldots,n,
$$
if the initial conditions \eqref{eq1:2} satisfy
$$
\|\varphi_k(\bs x)\|\geq \delta_0>0.
$$
\end{definition}

Here and below $\|\cdot\|$ denotes the norm in $L^2(\Omega)$.

A number of necessary and sufficient conditions of permanence of \eqref{eq1:7} is given in \cite{hofbauer1998ega}. The one which we will use says that the system \eqref{eq1:7} is permanent if
\begin{equation}\label{eq1:9}
    \DP{\bs{Aw}}{\bs p}-\DP{\bs{Aw}}{\bs w}>0
\end{equation}
for any equilibria $\bs w\in\bd S_n$. Here $\bs p$ is some fixed point in $\Int S_n$, i.e.,
$$
\sum_{k=1}^n p_k=1,\quad p_k>0,\quad k=1,\ldots,n.
$$
In \cite{novozhilov2013replicator} we showed that a similar condition can be obtained for the distributed system \eqref{eq1:1}--\eqref{eq1:3}. This condition, opposite to \eqref{eq1:9}, must be checked on \textit{all} elements $\bs w\in \bd S_n$ and at the same time the following condition on the parameters $d_k$ must be true:
\begin{equation}\label{eq1:10}
    \lambda_1d_{\min}>\mu,
\end{equation}
where $d_{\min}=\min_k\{d_k\}$, $\mu$ is the spectral radius of $\bs A$, and $\lambda_1$ is the first nonzero eigenvalue of the boundary value problem
\begin{equation}\label{eq1:11}
    -\Delta \psi (\bs x)=\lambda \psi(\bs x),\quad \bs x\in \Omega, \quad \left.\frac{\partial \psi}{\partial \nu}\right|_{\bs x\in \Gamma}=0.
\end{equation}

In \cite{novozhilov2013replicator} we proved that if the condition \eqref{eq1:10} holds then all the equilibria of \eqref{eq1:1}--\eqref{eq1:3} coincide with the equilibria of the local system \eqref{eq1:7}, i.e., they are spatially homogeneous. Hence, the conditions \eqref{eq1:9} and \eqref{eq1:10} provide the system permanence only if the equilibria are spatially homogeneous. This observation implies that the analysis we presented in \cite{novozhilov2013replicator} cannot rigorously identify the cases such that the spatial structure would stabilize the system. Therefore it is important to consider the case when  \eqref{eq1:10} does not hold.

\begin{definition}\label{def:2:4} We shall say that the set of parameters $\{d_i\}_{i=1}^n$ of the distributed system \eqref{eq1:1}--\eqref{eq1:3} is resonant if there exists the eigenvalue $\lambda$ of the problem \eqref{eq1:11}, such that
\begin{equation}\label{eq1:12}
\det (\bs A-\lambda \bs D)=0,
\end{equation}
where $\bs D=\diag(d_1,\ldots,d_n)$.
\end{definition}

For the following we assume that matrix $\bs A$ is non-negative and primitive such that the conditions of the Frobenius-Perron theorem hold. We denote $\mu$ the dominant eigenvalue of $\bs A$ to which corresponds a positive eigenvector. If $\bs D=d_0\bs I$, then from \eqref{eq1:12} we have that the maximal value of $d_0$ for which \eqref{eq1:12} holds is
\begin{equation}\label{eq1:13}
    d_0=\frac{\mu}{\lambda_1}\,,
\end{equation}
where $\lambda_1$ is the smallest positive eigenvalue of \eqref{eq1:11}.

\section{Spatially inhomogeneous solutions. Solutions with bounded support}\label{sec:3}
The stationary solutions to the problem \eqref{eq1:1}--\eqref{eq1:3} satisfy the system
\begin{equation}\label{eq3:1a}
u_i\Bigl((\bs{Au})_i-\bar f^{s}+d_i\Delta u_i\Bigr)=0,\quad i=1,\ldots,n,\quad \bs x\in\Omega,
\end{equation}
with the boundary conditions
\begin{equation}\label{eq3:1b}
    \frac{\partial u_i}{\partial\bs\nu}|_{\bs x\in\Gamma}=0,\quad i=1,\ldots,n.
\end{equation}
Here
\begin{equation}\label{eq3:2}
    \bar f^s=\int_{\Omega}\Bigl(\DP{\bs{Au}}{\bs u}-\sum_{i=1}^n d_i\|\nabla u_i\|^2\Bigr)\D \bs x.
\end{equation}
Together with problem  \eqref{eq3:1a}--\eqref{eq3:2} consider the equations for the equilibria of the local replicator equation
\begin{equation}\label{eq3:3}
    w_i\Bigl((\bs{Aw})_i-\DP{\bs{Aw}}{\bs w}\Bigr)=0,\quad \bs w\in S_n,\quad i=1,\ldots,n.
\end{equation}
Solutions to \eqref{eq3:3} are denoted below $\bs{\hat{w}}=(\hat w_1,\ldots,\hat w_n)$.

In \cite{novozhilov2013replicator} we showed that if the the set $\{d_i\}_{i=1}^n$ is not resonant then all the stationary solutions to \eqref{eq1:1}, \eqref{eq1:2} coincide with equilibria $\bs{\hat w}$ of \eqref{eq3:3}. Here our first goal is to show that existence of the resonant parameters implies the existence of spatially inhomogeneous stationary solutions of the distributed replicator system.

\begin{theorem}\label{th3:1}
Let the set $\{d_i\}$ be resonant with respect to some eigenvalue $\lambda_s$ of problem \eqref{eq1:11}. Then there exist nonnegative spatially inhomogeneous solutions to the system \eqref{eq3:1a}--\eqref{eq3:2} of the form
\begin{equation}\label{eq3:4}
    \bs{u}(\bs x)=\bs{\hat w}+m\bs c^s\psi_s(\bs x),
\end{equation}
where $\bs{\hat w}$ solves \eqref{eq3:3}, $m$ is an arbitrary constant, $\bs c^s=(c_1^s,\ldots,c_n^s)$ is a fixed vector, and $\psi_s$ is the eigenfunction of \eqref{eq1:11} corresponding to $\lambda_s$.
\end{theorem}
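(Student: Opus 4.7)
The plan is to verify the ansatz \eqref{eq3:4} directly by substituting it into the stationary equation \eqref{eq3:1a}, and showing that the nonlinear functional $\bar f^s$ is consistent on solutions of this form. I would first assume that we work with an interior equilibrium $\bs{\hat w}\in\Int S_n$ (existence of which follows from the Frobenius--Perron assumption on $\bs A$), so that on its support all components are strictly positive and, by \eqref{eq3:3}, $(\bs{A\hat w})_i=\DP{\bs{A\hat w}}{\bs{\hat w}}$ for every $i$. For $|m|$ sufficiently small the perturbation $m\bs c^s\psi_s(\bs x)$ keeps $\bs u(\bs x)\ge 0$ pointwise, the Neumann condition \eqref{eq3:1b} is inherited from $\psi_s$, and the normalization \eqref{eq1:5} holds because $\int_\Omega\psi_s\D\bs x=0$ for $\lambda_s>0$, so that $\sum_i\int_\Omega u_i\D\bs x=\sum_i\hat w_i=1$.

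With nonnegativity in hand, equation \eqref{eq3:1a} reduces, on the support of $\bs u$, to the single relation $(\bs{Au})_i+d_i\Delta u_i=\bar f^s$ for each $i$. Plugging in the ansatz gives
\begin{equation*}
(\bs{A\hat w})_i+m\psi_s(\bs x)\bigl[(\bs{Ac}^s)_i-\lambda_s d_i c_i^s\bigr]=\bar f^s,
\end{equation*}
using $\Delta\psi_s=-\lambda_s\psi_s$. Since the left-hand side must be constant in $\bs x$ while $\psi_s$ is not, I separate the constant mode from the $\psi_s$-mode. The constant part requires $(\bs{A\hat w})_i=\bar f^s$ for every $i$ in the support, which is exactly the interior-equilibrium condition for \eqref{eq3:3}; the $\psi_s$-mode requires $(\bs A-\lambda_s\bs D)\bs c^s=0$, and by the resonance hypothesis \eqref{eq1:12} such a nonzero vector $\bs c^s\in\ker(\bs A-\lambda_s\bs D)$ exists. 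This is the algebraic heart of the construction and pinpoints why resonance is exactly what is needed.

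The step that deserves the most care is the self-consistency check that the functional $\bar f^s$ defined by \eqref{eq3:2}, when evaluated on our ansatz, reproduces the constant $\DP{\bs{A\hat w}}{\bs{\hat w}}$ demanded above. I would expand
\begin{equation*}
\DP{\bs{Au}}{\bs u}=\DP{\bs{A\hat w}}{\bs{\hat w}}+2m\psi_s\DP{\bs{Ac}^s}{\bs{\hat w}}+m^2\psi_s^2\DP{\bs{Ac}^s}{\bs c^s},
\end{equation*}
integrate, and exploit $\int_\Omega\psi_s\D\bs x=0$ to kill the cross-term. The surviving quadratic part gives $m^2\DP{\bs{Ac}^s}{\bs c^s}\int_\Omega\psi_s^2\D\bs x$, and using $\bs{Ac}^s=\lambda_s\bs{Dc}^s$ this becomes $m^2\lambda_s\sum_i d_i(c_i^s)^2\int_\Omega\psi_s^2\D\bs x$. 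On the diffusive side, the identity $\int_\Omega|\nabla\psi_s|^2\D\bs x=\lambda_s\int_\Omega\psi_s^2\D\bs x$ (Green's formula with the Neumann condition in \eqref{eq1:11}) gives $\sum_i d_i\int_\Omega|\nabla u_i|^2\D\bs x=m^2\lambda_s\sum_i d_i(c_i^s)^2\int_\Omega\psi_s^2\D\bs x$, which cancels the quadratic contribution exactly. Hence $\bar f^s=\DP{\bs{A\hat w}}{\bs{\hat w}}$, closing the loop.

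The main obstacle I expect is not the algebra but delineating the admissible range of $m$ so that $\bs u\ge 0$ holds pointwise and the construction genuinely lives in $S_n(\Omega_t)$; this is a straightforward continuity/compactness argument using the boundedness of $\psi_s$ on $\overline{\Omega}$, but it is the only nontrivial analytic ingredient. I would then conclude by noting that, since $\psi_s$ is not constant, any nonzero $m$ with $\bs c^s\ne 0$ yields a genuinely spatially inhomogeneous nonnegative stationary solution, establishing the existence claim.
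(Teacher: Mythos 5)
Your proposal is correct and follows essentially the same route as the paper: decompose the solution as a local equilibrium plus an eigenfunction perturbation, use $\Delta\psi_s=-\lambda_s\psi_s$ to reduce the stationary equation to the algebraic system $(\bs A-\lambda_s\bs D)\bs c^s=0$, invoke the resonance condition \eqref{eq1:12} to obtain a nontrivial $\bs c^s$, and take $|m|$ small enough to preserve nonnegativity. The paper expands in the full eigenbasis and projects onto each $\psi_k$, so the single-mode form \eqref{eq3:4} is derived rather than posited, but it is the same computation. Your self-consistency check that $\bar f^s=\DP{\bs{A\hat w}}{\bs{\hat w}}$ --- i.e., that the quadratic contributions $m^2\DP{\bs{Ac}^s}{\bs c^s}\int_\Omega\psi_s^2\D\bs x$ and $m^2\sum_i d_i(c_i^s)^2\int_\Omega\|\nabla\psi_s\|^2\D\bs x$ cancel --- is exactly the projection onto the constant mode that the paper's proof silently skips, and making it explicit is an improvement, since without it the equation would force an inconsistent value of the constant part.

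Two caveats. First, your claim that an interior equilibrium exists ``by Frobenius--Perron'' is false in general: the Perron eigenvector $\bs v>0$ satisfies $(\bs{Av})_i=\mu v_i$, which is the replicator interior-equilibrium condition only when $\bs v$ is uniform, and indeed the primitive matrix of Example \ref{ex:0:2} admits no interior equilibrium at all. What the construction actually needs is a kernel vector $\bs c^s$ supported inside the support of \emph{some} equilibrium $\bs{\hat w}$ of \eqref{eq3:3} (automatic when $\bs{\hat w}$ is interior), because nonnegativity forces $c_i^s=0$ wherever $\hat w_i=0$, and on the support the identity $(\bs{A\hat w})_i=\DP{\bs{A\hat w}}{\bs{\hat w}}$ holds by \eqref{eq3:3}. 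The paper's proof carries the same implicit requirement --- its equation for the perturbation $\bs U$ presupposes that the $\bs{\hat w}$ terms cancel componentwise --- so your hypothesis is the right one; only its justification is off. Second, your cross term $2m\psi_s\DP{\bs{Ac}^s}{\bs{\hat w}}$ should read $m\psi_s\bigl(\DP{\bs{Ac}^s}{\bs{\hat w}}+\DP{\bs{A\hat w}}{\bs c^s}\bigr)$ for nonsymmetric $\bs A$; this is harmless, since the term is annihilated by $\int_\Omega\psi_s\D\bs x=0$ in any case.
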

\begin{proof}We will look for a solution to \eqref{eq3:1a}--\eqref{eq3:2} in the form
$$
u_k(\bs x)=\hat w_k+U_k(\bs x),\quad k=1,2,\ldots,n,
$$
where
\begin{equation}\label{eq3:5}
    U_k(\bs x)=\sum_{s=1}^\infty c_s^k\psi_s(\bs x).
\end{equation}
This is possible since the eigenfunctions form a complete system.

We have
\begin{align*}
(\bs{AU})_k&-\bar f^{s}_{\bs U}+d_k\Delta U_k(\bs x)=0,\quad k=1,\ldots,n,\quad \bs x\in\Omega,\\
\bar f^s_{\bs U}&=\int_{\Omega}\Bigl(\DP{\bs{AU}}{\bs U}-\sum_{i=1}^n d_i\|\nabla U_i\|^2\Bigr)\D \bs x,
\end{align*}
where $\bs U=(U_1,\ldots,U_n)$.

Taking the inner products in the last equality with $\psi_s$ consecutively and taking into account the orthogonality and normalization of the eigenfunctions implies
$$
(\bs A-\lambda_k\bs D) \bs c^k=0,\quad k=1,2,\ldots,s,\ldots.
$$
Due to the fact that the set $\{d_i\}$ is resonant for $k=s$ we have that $\det(\bs A-\lambda_s\bs D)=0$, and hence the $s$-th system has a nontrivial solution. Therefore we conclude that there is a stationary solution in the form \eqref{eq3:4}.

Without loss of generality we can take $\DP{\bs c^k}{\bs c^k}=1$. By choosing the arbitrary constant $m$ such that
$$
(\min_{\bs x\in\Omega} \phi_s(\bs x))mc^s_i+\hat w_i\geq 0,\quad i=1,\ldots,n,
$$
we guarantee that the found solutions are non-negative, which concludes the proof.
\end{proof}

In some cases the spatially heterogeneous solutions, whose existence was proved in Theorem \ref{th3:1}, can be found explicitly, as the following example shows.

\begin{example}Consider the stationary solutions of the distributed hypercycle equation in the spatial domain $\Omega=(0,1)$. This means that $(\bs{Au})_i=a_iu_{i-1},\,i=1,\ldots,n,\,u_{0}:=u_n$.

We have
$$
d_i\frac{\D^2 u_i}{\D x^2}+a_i u_{i-1}-\bar f^s=0,\quad i=1,\ldots,n.
$$
Let us look for the solution in the form
$$
u_i(x)=\frac{\bar f^s}{a_{i+1}}+v_i(x),\quad i=1,\ldots,n.
$$
Then
$$
\frac{\D^2 v_1}{\D x^2}+\frac{a_1}{d_1}v_n=0,\quad \frac{\D^2 v_n}{\D x^2}+\frac{a_n}{d_n}v_{n-1}=0,
$$
which implies
$$
\frac{\D^4 v_1}{\D x^4}-\frac{a_1a_n}{d_1d_n}v_{n-1}=0.
$$
We can continue and finally obtain
$$
\frac{\D^{2n} v_1}{\D x^{2n}}+(-1)^{n+1}R_nv_1=0,\quad R_n=\prod_{i=1}^n\frac{a_i}{d_i}\,.
$$
Assume that the set $\{d_i\}$ is resonant with respect to the first eigenvalue $\lambda_1=\pi^2$ of the problem \eqref{eq1:11} on $\Omega=(0,1)$. Equation \eqref{eq1:12} implies that the set $\{d_i\}$ will be resonant if
\begin{equation}\label{eq1:16}
\prod_{i=1}^nd_i=\frac{\prod_{i=1}^n a_i}{\lambda_1^n}\,.
\end{equation}
We note that in the special case $d_i=d_0,\,i=1,\ldots,$ \eqref{eq1:16} turns into
$$
d_0=\frac{\left(\prod_{i=1}^na_i\right)^{1/n}}{\lambda_1}\,.
$$

From \eqref{eq1:16} it follows that
$$
\pi^2=R_n^{1/n}.
$$
This means that the characteristic polynomial of the differential equation has a pair of imaginary roots $\pm\pi\I$. Taking into account the boundary conditions \eqref{eq1:2} we get
$$
v_1(x)=m\cos \pi x,\quad 0<x<1,
$$
where $m$ is a constant. Then
$$
v_n(x)=m\pi^2\frac{d_1}{a_1}\cos \pi x,\quad v_2(x)=m\pi^{2(n-1)}\frac{d_2\cdots d_n}{a_2\cdots a_n}\cos \pi x.
$$
Finally, from the equality for $\bar f^s$ one has
$$
\bar f^s=\frac{1}{\gamma}\,,\quad \gamma=\sum_{i=1}^n\frac{1}{a_i}\,.
$$
We can always choose the constant $m$ such that the found solutions for $u_i(x)$ are nonnegative, and hence we found spatially heterogeneous stationary solutions to the distributed hypercycle system.
\end{example}

Let again $\lambda_1$ be the first nonzero eigenvalue of the eigenproblem \eqref{eq1:11} and $\bs D^1=\diag(d_1^1,\ldots,d_n^1)$ be the corresponding set of the resonant parameters, that is we assume that
\begin{equation}\label{eq3:6}
\det(\bs A-\lambda_1\bs D^1)=0.
\end{equation}

Consider another set of parameters $\bs D=\diag(d_1,\ldots,d_n)$ such that
\begin{equation}\label{eq3:7}
    d_i=\delta d_i^1,\quad 0<\delta<1,\quad i=1,\ldots,n.
\end{equation}
Now it follows from simple arguments (see Lemma \ref{lem:A:1}) that the equality \eqref{eq3:6} with a new matrix $\bs D$ should be true for some new $\lambda>\lambda_1$. Since the spectrum of the problem \eqref{eq1:11} is discrete then for $\delta$ close enough to $1$ we get
$$
\det(\bs A-\lambda_1\bs D)\neq 0.
$$
A natural question to ask is what actually happens with the solutions to \eqref{eq3:1a}-\eqref{eq3:2} in this case. An answer is provided by the following theorem, which shows how the changes in the parameters $\bs D$ yield non homogeneous stationary solutions with a bounded support.

\begin{theorem}\label{th3:2}Let $\Omega=(0,1)$ and let condition \eqref{eq3:6} hold for the resonant set $\bs D^1$. Then for any $\bs D$ satisfying \eqref{eq3:7} there is such $0<l<1$ for which there exist spatially heterogeneous solutions to the stationary problem \eqref{eq3:1a}--\eqref{eq3:2} with the support $\Omega_l=(0,l)$.
\end{theorem}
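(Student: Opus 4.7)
\noindent\emph{Proof plan.} The plan is to construct a solution supported on the subinterval $\Omega_l=(0,l)$ for a suitable $l$, extended by zero on $(l,1)$. Since weak solutions must be continuous in $W^{1,2}(0,1)$, this requires $\bs{u}(l)=0$, so the problem on $(0,l)$ acquires mixed boundary conditions: Neumann at $x=0$ (inherited from \eqref{eq3:1b}) and Dirichlet at $x=l$. The first eigenvalue of the Laplacian on $(0,l)$ with these mixed conditions is $(\pi/(2l))^2$, with the non-negative eigenfunction $\cos(\pi x/(2l))$. The strategy is to choose $l$ so that $(\pi/(2l))^2$ coincides with a generalized eigenvalue of the pair $(\bs{A},\bs{D})$: since $\bs{D}=\delta\bs{D}^1$ and \eqref{eq3:6} says $\pi^2$ is a generalized eigenvalue of $(\bs{A},\bs{D}^1)$, the value $\pi^2/\delta$ is one for $(\bs{A},\bs{D})$. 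Matching $(\pi/(2l))^2=\pi^2/\delta$ selects the explicit candidate $l=\sqrt{\delta}/2\in(0,1)$.

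With this $l$, I would try the ansatz $\bs{u}(x)=A\,\bs{c}\,\cos(\pi x/(2l))$ on $(0,l)$, where $\bs{c}>0$ is the Frobenius--Perron eigenvector supplied by the standing positivity/primitivity assumption on $\bs{A}$, satisfying $\bs{A}\bs{c}=\pi^2\bs{D}^1\bs{c}=(\pi^2/\delta)\bs{D}\bs{c}$, and $A>0$ is a normalization constant. Substituting into $d_k u_k''+(\bs{A}\bs{u})_k=\bar{f}^s$ on the support, the identification $(\pi/(2l))^2=\pi^2/\delta$ makes the two $x$-dependent pieces cancel, so the equation forces $\bar{f}^s\equiv 0$. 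The boundary data is then automatic: $\bs{u}'(0)=0$ from the cosine profile, $\bs{u}(l)=0$ since $\cos(\pi/2)=0$, and on $(l,1)$ the equation and the Neumann condition at $x=1$ are trivially satisfied by the zero extension.

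It remains to verify that $\bar{f}^s=0$ is self-consistent with its definition \eqref{eq3:2}. A short computation shows that $\DP{\bs{A}\bs{u}}{\bs{u}}$ and $\sum_k d_k\|\nabla u_k\|^2$ produce identical prefactors $A^2(\pi^2/\delta)\sum_k d_k c_k^2$ times $\cos^2(\pi x/(2l))$ and $\sin^2(\pi x/(2l))$ respectively, so their difference is proportional to $\cos(\pi x/l)$, which integrates to zero over $(0,l)$. The normalization $\sum_k\int_0^1 u_k\,\D x=1$ then fixes $A>0$, and non-negativity of $\bs{u}$ on $[0,l]$ follows from $A>0$, $\bs{c}>0$, and $\cos(\pi x/(2l))\geq 0$; spatial heterogeneity is obvious.

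The main obstacle, to my mind, is the structural observation that the right-hand boundary condition for a bounded-support solution is Dirichlet, forced by $W^{1,2}$-continuity of the zero extension rather than by the original Neumann data. Once this is recognized, the resonance hypothesis \eqref{eq3:6} together with the rescaling \eqref{eq3:7} rigidly dictates the support size $l=\sqrt{\delta}/2$, and the remainder is verification. A secondary subtlety requiring care is the self-consistency of $\bar{f}^s$; the $\cos^2-\sin^2$ cancellation that makes it work is precisely what isolates this bounded-support phenomenon at the particular value of $l$ dictated by $\delta$.
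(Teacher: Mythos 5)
Your construction is genuinely different from the paper's: you use the \emph{mixed} Neumann--Dirichlet eigenproblem on $(0,l)$ (first eigenvalue $(\pi/(2l))^2$), make \emph{all} components vanish at $x=l$, and force $\bar f^s=0$; the paper instead uses the Neumann eigenproblem on $(0,l)$ (first nonzero eigenvalue $(\pi/l)^2$, eigenfunction $\cos(\pi x/l)$) and builds the solution as a perturbation $\frac{\hat w_i}{l}\bigl(1+mc_i^1\cos\frac{\pi x}{l}\bigr)$ of the positive homogeneous equilibrium, tuning $m$ so that exactly one component acquires a double zero at $x=l$ and is extended by zero, the others by constants. Your eigenvalue matching ($l=\sqrt{\delta}/2$) and the self-consistency computation for $\bar f^s=0$ (the $\cos^2-\sin^2$ cancellation) are correct. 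The genuine gap is your claim that the kernel vector $\bs c$ of $\bs A-\pi^2\bs D^1$ can be taken positive. Hypothesis \eqref{eq3:6} only says that $\pi^2$ is \emph{some} eigenvalue of the nonnegative primitive matrix $\bs A(\bs D^1)^{-1}$, not its dominant one, and by Perron--Frobenius the \emph{only} eigenvalue of such a matrix admitting a nonnegative eigenvector is the spectral radius. Concretely, for $\bs A=\bigl[\begin{smallmatrix}3&1\\1&3\end{smallmatrix}\bigr]$ and $\bs D^1=\frac{2}{\pi^2}\bs I$ one has $\det(\bs A-\pi^2\bs D^1)=0$ with kernel spanned by $(1,-1)$, the resonance being at the non-Perron eigenvalue $\pi^2<2\pi^2$ of $\bs A(\bs D^1)^{-1}$. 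In such cases your ansatz $u_i=Ac_i\cos(\pi x/(2l))$, which has no positive background to absorb sign changes, necessarily has a negative component and is inadmissible; your normalization $\sum_k c_k>0$ may also fail. This obstruction is precisely what the paper's perturbative construction (amplitude $|c^1_i/c^1_k|\le 1$ around a positive constant) is built to avoid. Your scheme is reparable, though: resonate with the Perron root $\rho$ of $\bs A\bs D^{-1}$ rather than with the inherited eigenvalue $\pi^2/\delta$, i.e., take $l=\pi/(2\sqrt{\rho})$; since $\rho\ge\pi^2/\delta>\pi^2$ by \eqref{eq3:6}--\eqref{eq3:7} and Lemma \ref{lem:A:1}, this $l$ lies in $(0,1/2)$ and the corresponding Perron vector is genuinely positive.

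A secondary point you must address explicitly: your profile vanishes at $x=l$ only to first order, $u_i'(l^-)=-Ac_i\pi/(2l)\neq 0$, so the zero extension has a corner and the distributional Laplacian $\Delta u_i$ carries a point mass at $x=l$. Membership in $W^{1,2}$ is fine for $m=1$, but you then need to say in which sense \eqref{eq3:1a} holds across $x=l$: the product $u_i\Delta u_i$ is of the form ``$0\cdot\delta$'', and you must commit to an interpretation (continuous function times a measure, or $\int u_i\Delta u_i\,\eta:=-\int u_i'\,(u_i\eta)'$) under which the atom is annihilated by $u_i(l)=0$. Note that under the weak formulation the paper actually displays (diffusion tested as $-d_k\int\DP{\nabla v_k}{\nabla\eta}$), your corner produces an uncompensated boundary term proportional to $u_k'(l^-)\eta(l)$, so only the degenerate pointwise-a.e.\ reading of \eqref{eq3:1a} accepts your solution. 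The paper's choice of the Neumann eigenfunction $1+\cos(\pi x/l)$, with its double zero at $x=l$, exists exactly to sidestep this: the extension there is $C^1$ and no singular part ever appears.
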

\begin{proof}The eigenvalues and eigenfunctions of \eqref{eq1:11} for $\Omega_l$ are
$$
\lambda_0^l=0,\quad \lambda_k^l=\left(\frac{k\pi}{l}\right)^2,\,k=1,2,\ldots,\quad \phi_0^l(x)=\frac{1}{l}\,,\phi_k^l(x)=\sqrt{\frac{2}{l}}\cos\frac{\pi x}{l}\,.
$$
Hence the eigenvalues for $\Omega_l$ are greater then those for $\Omega$ and hence, due to the continuous dependence of $\lambda_k^l$ on $l$ there will be $0<l<1$ such that
\begin{equation}\label{eq3:9}
    \det(\bs A-\lambda_1^l\bs D)=0.
\end{equation}
We look for the solutions to \eqref{eq3:1a}--\eqref{eq3:2} in the form
$$
v_i(x)=\frac{\hat w_i}{l}+U_i^l(x),\quad U_i^l(x)=\sum_{s=1}^\infty c_s^l\psi_s^l(x).
$$
Reasoning similarly to the proof of Theorem \ref{th3:1} we find that for $\bs c^1=(c_1^1,\ldots,c_n^1)$ we have
$$
(\bs A-\lambda_1^l\bs D)\bs c^1=0.
$$
Due to \eqref{eq3:9} this system has a nontrivial solution, which we can normalize as $\DP{\bs c^1}{\bs c^1}=1$.

The found solutions can be represented for $x\in\Omega_l$ as
$$
v_i(x)=\frac{\hat w_i}{l}\left(1+mc_i^1\cos \frac{\pi x}{l}\right),
$$
where $m$ is an arbitrary constant. Let $m=|c^1_k|^{-1}$, where $|c_k^1|=\max\{|c^1_1|,\ldots,|c^1_n|\}$ and consider the functions
\begin{equation}\label{eq3:8}
\begin{split}
  u_i(x) &=\begin{cases}
            \frac{\hat w_i}{l}\left(1+\frac{c^1_i}{|c^1_k|}\cos \frac{\pi x}{l}\right),&0<x<l,\\
            \frac{\hat w_i}{l}\left(1-\frac{c^1_i}{|c^1_k|}\right),&l\leq x\leq 1,
           \end{cases},\quad i\neq k, \\
  u_k(x) &=\begin{cases}
            \frac{\hat w_k}{l}\left(1+\cos \frac{\pi x}{l}\right),&0<x<l,\\
            0,&l\leq x\leq 1.
           \end{cases}
\end{split}
\end{equation}
By construction functions $u_i$ are continuous  together with their derivatives at $x=l$, hence the obtained solutions are in $W^{1,2}(\Omega)$. Moreover, ${\rm{supp}}\,u_k=\Omega_l\subset\Omega$.
\end{proof}

The following example illustrates Theorem \ref{th3:2}.
\begin{example}\label{ex3:3} Consider the stationary solutions to the hypercyclic system in the particular case $d_i=d_0,a_i=a_0,i=1,2,\ldots,n$. Assume that
$$
d_0<\frac{a_0}{\pi^2}\,.
$$
Clearly there exists $0<l<1$ such that
$$
d_0=\frac{a_0}{\lambda_1^l}\,,\quad \lambda_1^l=\left(\frac{\pi}{l}\right)^2.
$$
The corresponding stationary solutions with the support given by $\Omega_l$ are
$$
u_i(x)=\begin{cases}
            \frac{a_0}{nl}\left(1+\cos \frac{\pi x}{l}\right),&0<x<l,\\
            0,&l\leq x\leq 1,
           \end{cases},\quad i=1,\ldots,n.
$$
In this case
$$
l=\pi\sqrt{\frac{d_0}{a_0}}\,.
$$
\end{example}

It can be directly checked that the results of Theorem \ref{th3:2} and Example \ref{ex3:3} can be explicitly generalized on some other domains in $\R^2$ or $\R^3$.

\begin{enumerate}
\item We can consider the square $\Omega=(0,1)\times(0,1)$ and a rectangle $\Omega_{l_1,l_2}=(0,l_1)\times(0,l_2)$. In this case the corresponding stationary solutions, nonzero only on $\Omega_{l_1,l_2}$, have the form
    $$
 u_i(x)=
\begin{cases}
            \frac{\hat w_i}{l_1l_2}\left(1+\cos \frac{\pi x_1}{l_1}\right)\left(1+\cos \frac{\pi x_2}{l_2}\right),&0<x_1<l_1,\,0<x_2<l_2,\\
            0,&l_1\leq x_1\leq 1,\,l_2\leq x_2\leq 1.
           \end{cases}
$$
\item In the case of the circle $\Omega=\{(x_1,x_2)\colon x_1^2+x_2^2<1\}$ for the solutions that do not depend on the polar angle
$$
u_i(r)=\begin{cases}
            \frac{\hat w_i}{\pi^2 l}\left(1-c_0J_0\left(\frac{\mu_1^1r}{l}\right)\right),&0<r<l,\\
            0,&l\leq r\leq 1.
           \end{cases}
$$
Here $r=\sqrt{x_1^2+x_2^2}$, $J_0$ is Bessel's function of the first kind, $\mu_1^1$ is the first positive zero of $J_1(r)$, $c_0=(J_0(\mu_1^1))^{-1}$.
\item In the case of the sphere $\Omega={(x_1,x_2,x_3)\colon x_1^2+x_2^2+x_3^2<1}$ for the solutions that do not depend of angular variables, we find
$$
u_i(r)=\begin{cases}
            \frac{3\hat w_i}{4\pi l^3}\left(1-c_1\sqrt{\frac{2l}{\pi\mu r}\sin\frac{\mu r}{l}}\right),&0<r<l,\\
            0,&l\leq r\leq 1,
           \end{cases}
$$
where $\mu$ is the first positive root of the equation $\tan r=2 r$,
$$
c_1=\left(\sqrt{\frac{2}{\pi \mu}}\sin \mu\right)^{-1},\quad r=\sqrt{x_1^2+x_2^2+x_3^2}.
$$
\end{enumerate}

The list of examples can be extended. Which is more important, however, is that in the general case we can conclude that the stationary spatially heterogeneous solutions with a support, which is a proper subset of $\Omega$, appear if there exists a domain $\Omega_1\subset \Omega$ with a smooth boundary $\Gamma_1$ such that the linear combination of the solutions to the eigenvalue problem \eqref{eq1:11} in $\Omega_1$ allows a continuous extension into the domain $\Omega\setminus \Omega_1$ in the case $\dim \Omega=1$ and a smooth extension in this domain in the case $\dim \Omega=2$ or $3$. From the variational principle (e.g., \cite{cantrell2003spatial}) it follows that the eigenvalues of \eqref{eq1:11} in $\Omega_1$ are bigger than the eigenvalues of the same problem solved in $\Omega$. Moreover, if the measure of $\Omega_1$ decreases the eigenvalues will grow.

We remark that using similar to Theorem \ref{th3:2} reasonings it is possible to show more, in particular that at least some of these spatially heterogeneous solution, which existence was proved in Theorem \ref{th3:2}, are attracting. Here is an illustration by an explicit example.
\begin{example}Consider a hypercyclic system on $\Omega=(0,1)$ with the matrix
$$
\bs A=\begin{bmatrix}
        0 & k \\
        k & 0 \\
      \end{bmatrix}
$$
and assume that the parameters in \eqref{eq1:1}--\eqref{eq1:3} are
$$
d_1=d_2=d=\frac{k}{(2\pi)^2}\,.
$$
This means that the parameters are resonant with the second nonzero eigenvalue of \eqref{eq1:11}. Let us look for a solution in the form
\begin{equation}\label{eq3:15}
    v_i(x,t)=\begin{cases}g_i(t)(1+\cos 2\pi x),&0<x<1/2,\\
    0,&1/2\leq x<2,
    \end{cases}
\end{equation}
$i=1,2$. In this case
$$
f^{s}(t)=\frac{k}{2}\Bigl(3g_1(t)g_2(t)-\frac 12\bigl(g_1^2(t)+g_2^2(t)\bigr)\Bigr).
$$
From the condition \eqref{eq1:5} it follows that
\begin{equation}\label{eq3:16}
    g_1(t)+g_2(t)=2.
\end{equation}
On integrating the system of the equations through $\Omega$ and taking into account \eqref{eq3:15}, we obtain the ODE system
\begin{align*}
\dot g_1&=\frac{k}{2}\left(g_2^2-g_1^2\right),\\
\dot g_2&=\frac{k}{2}\left(g_1^2-g_2^2\right).
\end{align*}
Using \eqref{eq3:16} yields
$$
\dot g_i=2k(1-g_i),
$$
and hence
$$
g_i(t)=1+(g_i(0)-1)e^{-2kt},
$$
therefore
$$
g_i(t)\to 1,\quad i=1,2,\,t\to\infty,
$$
and therefore all the solutions of this particular form will tend to the equilibrium solution with a bounded support, namely
\begin{equation*}
    \lim_{t\to\infty} v_i(x,t)=u_i(x)=\begin{cases}1+\cos 2\pi x,&0<x<1/2,\\
    0,&1/2\leq x<2.
    \end{cases}
\end{equation*}

\end{example}
\section{Sufficient conditions for permanence}\label{sec:4}
One of the possible sufficient conditions for the ODE replicator equation \eqref{eq1:7} to be permanent takes the following form. If there exists a $\bs p\in\Int S_n$ such that
\begin{equation}\label{eq4:1}
    \DP{\bs p}{\bs{Aw}}>\DP{\bs{Aw}}{\bs w}
\end{equation}
for all equilibria $\bs w\in\bd S_n$ then system \eqref{eq1:7} is permanent. From Remark \ref{remark:2} it follows that we can identify any function $\bs v(\bs x,t)\in S_n(\Omega)$ with an element $\bs w(t)\in S_n$, by having $\bs w(t)=\overline{\bs v}(t)$, and the same is true for the elements on $\bd S_n(\Omega)$. Therefore one can expect that an analogous to \eqref{eq4:1} condition for the distributed replicator system may look
\begin{equation}\label{eq4:2}
    \DP{\bs p}{\bs{Aw}}>\DP{\bs{Aw}}{\bs w},\quad \bs w(t)=\overline{\bs v}(t)\in\bd S_n,\quad \bs p\in \Int S_n.
\end{equation}
This is indeed true, however, we will show that system \eqref{eq1:1}--\eqref{eq1:3} can be permanent even in a situation when the condition \eqref{eq4:2} does not hold.

First we formulate and prove an auxiliary lemma.
\begin{lemma}\label{lem4:1}Let the set of parameters $\{d_k\}$ of system \eqref{eq1:1}--\eqref{eq1:3} be resonant with respect to the first nonzero eigenvalue of \eqref{eq1:11} in $\Omega$. If there exist spatially nonhomogeneous solutions to \eqref{eq1:1}--\eqref{eq1:3}
\begin{equation}\label{eq4:3}
    v_k(\bs x,t)=w_k(t)+V_k(\bs x,t),\quad w_k(t)=\int_{\Omega_1}v_k(\bs x,t)\D \bs x,\quad k=1,\ldots,n,
\end{equation}
with the support in $\Omega_1\subset \Omega$, with the measure $S_1$, such that at least one spatially nonhomogeneous component (with the index $K_0$) of the solution  satisfies
\begin{equation}\label{eq4:4}
    \|V_{K_0}\|\geq \delta_0>0,
\end{equation}
then
\begin{equation}\label{eq4:5}
    \Phi(\bs V)=\int_{\Omega_1}\Bigl(\DP{\bs{AV}}{\bs V}-\sum_{i=1}^n d_i\|\nabla V_i\|^2\Bigr)\D x\leq -\delta_0^2q_1(S_1),
\end{equation}
where $q_1(S_1)$ is a positive quantity that can only increase if the measure $S_1$ decreases.
\end{lemma}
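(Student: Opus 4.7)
The strategy is to expand $\bs V$ in the Neumann eigenbasis of $-\Delta$ on $\Omega_1$ and to exploit the resonance hypothesis on $\Omega$ together with the domain monotonicity of Neumann eigenvalues used in the paragraph preceding Theorem \ref{th3:2}. Let $\{\psi_s^{\Omega_1}\}_{s\ge 0}$ denote the $L^2(\Omega_1)$-orthonormal eigenfunctions, with eigenvalues $0=\lambda_0^{\Omega_1}<\lambda_1^{\Omega_1}\le\lambda_2^{\Omega_1}\le\cdots$. Isolating the mean-zero component of $V_k$ on $\Omega_1$ and writing $V_k=\sum_{s\ge 1}c_s^k\psi_s^{\Omega_1}$, the Parseval identities
\[
\|V_k\|^2=\sum_{s\ge 1}(c_s^k)^2,\qquad \|\nabla V_k\|^2=\sum_{s\ge 1}\lambda_s^{\Omega_1}(c_s^k)^2,
\]
together with $\langle\bs A\bs c,\bs c\rangle=\langle\bs A_{\mathrm{sym}}\bs c,\bs c\rangle$, where $\bs A_{\mathrm{sym}}=(\bs A+\bs A^T)/2$, reduce the functional to a diagonal sum
\[
\Phi(\bs V)=\sum_{s\ge 1}\bigl\langle(\bs A_{\mathrm{sym}}-\lambda_s^{\Omega_1}\bs D)\bs c_s,\bs c_s\bigr\rangle,\qquad \bs c_s=(c_s^1,\ldots,c_s^n).
\]

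Next I would use resonance together with the Perron--Frobenius assumption on $\bs A$ (from the paragraph following Definition \ref{def:2:4}) to conclude that $\lambda_1$ is the dominant generalized eigenvalue of the pair $(\bs A,\bs D)$, so that $\bs A_{\mathrm{sym}}-\lambda_1\bs D\preceq 0$. The identity
\[
\bs A_{\mathrm{sym}}-\lambda_s^{\Omega_1}\bs D=(\bs A_{\mathrm{sym}}-\lambda_1\bs D)-(\lambda_s^{\Omega_1}-\lambda_1)\bs D
\]
then exhibits the right-hand side as a negative semi-definite matrix plus a strictly negative definite one, because $\lambda_s^{\Omega_1}\ge\lambda_1^{\Omega_1}>\lambda_1$ for $s\ge 1$ by the variational comparison of Neumann eigenvalues cited in the paper. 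This yields the mode-by-mode bound
\[
\bigl\langle(\bs A_{\mathrm{sym}}-\lambda_s^{\Omega_1}\bs D)\bs c_s,\bs c_s\bigr\rangle\le -(\lambda_1^{\Omega_1}-\lambda_1)\,d_{\min}\,|\bs c_s|^2.
\]

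Summing over $s\ge 1$ and using $\|V_{K_0}\|^2=\sum_{s\ge 1}(c_s^{K_0})^2\le\sum_{s\ge 1}|\bs c_s|^2$ together with the hypothesis $\|V_{K_0}\|\ge\delta_0$, I obtain
\[
\Phi(\bs V)\le -d_{\min}(\lambda_1^{\Omega_1}-\lambda_1)\,\delta_0^2,
\]
so one may take $q_1(S_1)=d_{\min}(\lambda_1^{\Omega_1}-\lambda_1)>0$. The monotonicity assertion is then the variational principle quoted just before Theorem \ref{th3:2}: shrinking $S_1=|\Omega_1|$ pushes $\lambda_1^{\Omega_1}$ upward and therefore increases $q_1(S_1)$.

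The main obstacle is establishing $\bs A_{\mathrm{sym}}-\lambda_1\bs D\preceq 0$, because the scalar condition $\det(\bs A-\lambda_1\bs D)=0$ alone does not control the largest eigenvalue of the symmetrized pencil when $\bs A$ is nonsymmetric. Closing this gap requires the Perron--Frobenius setup for $\bs A$ together with the extremality of $\lambda_1$ already used implicitly in the scaling argument around \eqref{eq3:7} and formalized by the appendix lemma \ref{lem:A:1}; this identifies $\lambda_1$ as the \emph{largest} generalized eigenvalue of $(\bs A,\bs D)$ arising from a Neumann eigenvalue of \eqref{eq1:11}. A secondary bookkeeping point is that the representation $v_k=w_k+V_k$ forces $V_k$ to carry a constant shift when $|\Omega_1|\ne 1$, but this shift contributes nothing to $\|\nabla V_k\|^2$ and can be absorbed into the $w_k(t)$ term, so the mode-by-mode gradient estimate is unaffected.
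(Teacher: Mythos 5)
Your proposal follows essentially the same route as the paper's own proof: expansion of $\bs V$ in the Neumann eigenbasis of $\Omega_1$, symmetrization $\DP{\bs{AV}}{\bs V}=\DP{\bs A^{+}\bs V}{\bs V}$, the mode-by-mode reduction of $\Phi(\bs V)$ to the pencils $\bs A^{+}-\lambda_j^{1}\bs D$, domain monotonicity $\lambda_j^{1}>\lambda_1$, and the final summation using $\|V_{K_0}\|\geq\delta_0$. Two differences are worth noting, both in your favor on precision. First, your constant $q_1(S_1)=d_{\min}\bigl(\lambda_1^{\Omega_1}-\lambda_1\bigr)$, obtained from the splitting $\bs A^{+}-\lambda_s^{\Omega_1}\bs D=(\bs A^{+}-\lambda_1\bs D)-(\lambda_s^{\Omega_1}-\lambda_1)\bs D$, is explicit, whereas the paper leaves $q_j(S_1)$ unspecified; your remark about absorbing the constant shift of $V_k$ into $w_k(t)$ also repairs a small normalization slip in \eqref{eq4:3}--\eqref{eq4:7} (as written there, $\int_{\Omega_1}V_k\D\bs x=0$ fails unless $w_k$ is the mean value over $\Omega_1$). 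Second, the ``main obstacle'' you flag --- that $\det(\bs A-\lambda_1\bs D)=0$ does not control the symmetrized pencil, so $\bs A^{+}-\lambda_1\bs D\preceq 0$ is not established for nonsymmetric $\bs A$ --- is precisely the point where the paper invokes Lemma~\ref{lem:A:2} applied to $\bs A^{+}$. That invocation is no more rigorous than your hedge: Lemma~\ref{lem:A:2} requires the threshold to exceed the dominant eigenvalue of $\bs A^{+}\bs D^{-1}$, while the resonance hypothesis concerns $\bs A\bs D^{-1}$, and for nonsymmetric primitive $\bs A$ the top eigenvalue of the symmetrized pencil can strictly exceed the Perron root of $\bs A\bs D^{-1}$. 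Your proposed remedy via Lemma~\ref{lem:A:1} does not close this either, since extremality of $\lambda_1$ within the spectrum of the pair $(\bs A,\bs D)$ still says nothing about the symmetrized pair $(\bs A^{+},\bs D)$; there is the further tacit assumption, shared by both arguments, that the resonant $\lambda_1$ is the \emph{dominant} generalized eigenvalue rather than just some eigenvalue. In short, your proof and the paper's coincide in structure and share the same residual gap; a fully rigorous statement would either assume $\bs A$ symmetric (where everything you wrote goes through verbatim) or add the hypothesis that $\lambda_1$ dominates the spectrum of $\bs A^{+}\bs D^{-1}$.
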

\begin{proof}
First of all we note that the values of the inner product $\DP{\bs{AV}}{\bs V}$ are determined by the symmetric part of $\bs A$. Indeed, consider
$$
\bs A=\frac 12 (\bs A+\bs A^\top)+\frac 12(\bs A-\bs A^\top)=\bs A^++\bs A^-,
$$
where $\bs A^+$ is symmetric and $\bs A^-$ is skew-symmetric. Then, since $\DP{\bs{A^-V}}{\bs V}=0$,
$$
\DP{\bs{AV}}{\bs V}=\DP{\bs{A^+V}}{\bs V}.
$$

Consider the eigenvalue problem \eqref{eq1:11} in $\Omega_1$ and denote the eigenfunctions and eigenvalues $\{\psi^1_i(\bs x)\}_{i=0}^\infty$ and $\{\lambda_i^1\}_{i=0}^\infty$ respectively. From the completeness of the system of eigenfunctions it follows that any solution can be represented as in \eqref{eq4:3}, moreover
\begin{equation}\label{eq4:6}
    V_k(\bs x,t)=\sum_{j=1}^\infty c_j^k(t)\psi_j^1(\bs x),
\end{equation}
and
\begin{equation}\label{eq4:7}
    \int_{\Omega_1}V_k(\bs x,t)\D x=0,\quad k=1,\ldots,n.
\end{equation}
Let us use the equality
$$
\int_{\Omega_1}\|\nabla V_k\|^2\D x=-\int_{\Omega_1}\DP{\Delta V_k}{V_k}\D x=\sum_{j=1}^\infty \lambda_j^1c_j^k(t)c_j^s(t).
$$
Then
\begin{equation}\label{eq4:8}
    \Phi(\bs V)=\sum_{j=1}^\infty \DP{(\bs A^+-\lambda_j^1\bs D)\bs c_j(t)}{\bs c_j(t)},
\end{equation}
where $\bs c_j(t)=(c_j^1,\ldots,c_j^n(t))$. Since $\Omega_1\subset \Omega$ then $\lambda_j^1>\lambda_1$ where $\lambda_1$ is the first nonzero eigenvalue of \eqref{eq1:11} in $\Omega$. From Lemma \ref{lem:A:2} it follows that all the eigenvalues of $\bs A^+-\lambda_j^1\bs D$ will be negative and hence
$$
\DP{(\bs A^+-\lambda_j^1\bs D)\bs c_j(t)}{\bs c_j(t)}\leq q_j(S_1)\|\bs c_j(t)\|^2,\quad j=1,2,\ldots.
$$
Here $\|\bs c_j(t)\|^2=\sum_{k=1}^n \left(c_j^k(t)\right)^2$, and $q_j(S_1)$ are positive quantities such that
\begin{equation}\label{eq4:9}
    q_1(S_1)\leq q_2(S_1)\leq \ldots.
\end{equation}
Note that $q_1(S_1)$ may only increase if the measure $S_1$ of $\Omega_1$ decreases.

Now we use the condition \eqref{eq4:4} that implies that
$$
\sum_{j=1}^\infty \|\bs c_j(t)\|^2\geq \sum_{j=1}^\infty (c_j^{K_0})^2=\|V_{K_0}(\bs x,t)\|^2\geq \delta_0^2.
$$
As a result, using \eqref{eq4:9} and \eqref{eq4:8}, we obtain \eqref{eq4:5}.
\end{proof}

\begin{theorem}\label{th4:1} Let the conditions of Lemma \ref{lem4:1} hold and assume that
\begin{equation}\label{eq4:10}
    \min_{\bs w\in\bd S_n}\{\DP{\bs p}{\bs{Aw}}-\DP{\bs{Aw}}{\bs w}\}=-m<0
\end{equation}
for any $\bs p\in\Int S_n$. Then system \eqref{eq1:1}--\eqref{eq1:3} is permanent if the measure $S_1$ of $\Omega_1$ is sufficiently small.
\end{theorem}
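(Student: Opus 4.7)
The proof will be a distributed analogue of Hofbauer's \emph{average Lyapunov function} argument for the local replicator equation, with the key new ingredient being the spatial--variation estimate of Lemma \ref{lem4:1}. For the interior point $\bs p$ of \eqref{eq4:10}, I would introduce the functional
\begin{equation*}
    P[\bs v](t) \;=\; \prod_{k=1}^n \overline{v}_k(t)^{\,p_k},
\end{equation*}
which is strictly positive on $\Int S_n(\Omega_t)$ and vanishes on $\bd S_n(\Omega_t)$. The goal is to show that $\frac{d\log P}{dt}$ is bounded below by a strictly positive constant in a neighborhood of $\bd S_n(\Omega_t)$ as soon as $S_1$ is sufficiently small; a standard persistence argument will then force $\overline{v}_k$, and hence (by Cauchy--Schwarz with $|\Omega|=1$) also $\|v_k\|_{L^2}$, to remain uniformly bounded away from zero, which is precisely permanence in the sense of Definition \ref{def1:1}.

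The computation proceeds as follows. Integrating \eqref{eq1:1} over $\Omega$, using the Neumann condition \eqref{eq1:2} to replace $\int_\Omega v_k\Delta v_k\D\bs x$ by $-\int_\Omega\|\nabla v_k\|^2\D\bs x$, and substituting the decomposition $v_k=w_k(t)+V_k(\bs x,t)$ of Lemma \ref{lem4:1} so that the cross-terms linear in $V_k$ drop out, one obtains after division by $\overline v_k$
\begin{equation*}
    \frac{\dot{\overline v}_k}{\overline v_k} \;=\; (\bs{Aw})_k - \DP{\bs{Aw}}{\bs w} \;-\; \Phi(\bs V) \;+\; \frac{1}{\overline v_k}\int_{\Omega_1}\!\bigl(V_k(\bs{AV})_k - d_k\|\nabla V_k\|^2\bigr)\D\bs x,
\end{equation*}
with $\Phi(\bs V)$ as in \eqref{eq4:5}. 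Multiplying by $p_k$, summing, and using $\sum_k p_k=1$,
\begin{equation*}
    \frac{d\log P}{dt} \;=\; \bigl[\DP{\bs p}{\bs{Aw}} - \DP{\bs{Aw}}{\bs w}\bigr] \;-\; \Phi(\bs V) \;+\; \mathcal R(\bs w,\bs V).
\end{equation*}
By \eqref{eq4:10} the first bracket is at worst $-m$, while Lemma \ref{lem4:1} gives $-\Phi(\bs V)\geq\delta_0^2\,q_1(S_1)$ with $q_1(S_1)\to\infty$ as $S_1\to 0$. Consequently, for $S_1$ small enough,
\begin{equation*}
    \frac{d\log P}{dt} \;\geq\; \delta_0^2\,q_1(S_1) - m - |\mathcal R| \;>\; 0
\end{equation*}
in a neighborhood of $\bd S_n(\Omega_t)$, and the average Lyapunov function criterion, transplanted to the integral simplex, delivers the claim.

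The main technical obstacle is controlling the remainder $\mathcal R$, whose summands carry factors $1/\overline v_k$ that could a priori diverge as $\bs v\to\bd S_n(\Omega_t)$. The diffusion piece $-d_k\|\nabla V_k\|^2$ has the favourable sign and can be discarded in the lower bound. For the bilinear piece, the key observation is that the non-negativity $v_k\geq 0$ forces the negative part of $V_k$ to be pointwise controlled by the mean, and together with the $L^\infty$ bound coming from the $B(\Omega_t)$ regularity and the explicit profile of the bounded-support solutions of Theorem \ref{th3:2}, one obtains a quantitative estimate of the form $\|V_k\|_{L^2}^2\leq C\,\overline v_k$ in a neighborhood of $\bd S_n(\Omega_t)$, which exactly absorbs the singular factor $1/\overline v_k$ and keeps $|\mathcal R|$ uniformly bounded. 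Once $\mathcal R$ is controlled in this way, the dominant term $\delta_0^2 q_1(S_1)$---which can be made as large as desired by shrinking $S_1$---overwhelms both $m$ and $|\mathcal R|$, and permanence follows.
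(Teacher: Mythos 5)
Your overall strategy (an average--Lyapunov functional built from $\bs p$, combined with the estimate of Lemma \ref{lem4:1}) is the right one, and your expansion of $\frac{\D}{\D t}\log P$ into $\bigl[\DP{\bs p}{\bs{Aw}}-\DP{\bs{Aw}}{\bs w}\bigr]-\Phi(\bs V)+\mathcal R$ is algebraically correct. The genuine gap is in your treatment of the remainder $\mathcal R=\sum_k \frac{p_k}{\overline v_k}\bigl(\int_{\Omega_1}V_k(\bs{AV})_k\D\bs x-d_k\|\nabla V_k\|^2\bigr)$, and specifically of its diffusion part. You write that $-d_k\|\nabla V_k\|^2$ ``has the favourable sign and can be discarded in the lower bound.'' This is backwards: that term enters $\frac{\D}{\D t}\log P$ with a \emph{negative} sign, so discarding it makes the right-hand side larger, which is illegitimate when you are trying to bound $\frac{\D}{\D t}\log P$ from below. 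Worse, this term is genuinely dangerous: near $\bd S_n(\Omega_t)$ the factor $1/\overline v_k$ blows up while $\|\nabla V_k\|^2$ has no reason to vanish proportionally (by hypothesis \eqref{eq4:4} the fluctuation of at least one component is bounded \emph{away} from zero, and a Poincar\'e inequality then bounds its gradient below). Your proposed estimate $\|V_k\|_{L^2}^2\leq C\,\overline v_k$ addresses only the bilinear piece of $\mathcal R$ (and even there rests on a uniform $L^\infty$ bound that is assumed rather than proved); it gives no control whatsoever of $\|\nabla V_k\|^2/\overline v_k$. So as written the argument does not close.

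The paper's proof avoids this problem entirely by a different choice of functional: instead of the logarithm of the mean, it takes the mean of the logarithm,
\begin{equation*}
F(\bs v)=\exp\Bigl(\sum_k p_k\,\overline{\log v_k(\bs x,t)}\Bigr),\qquad \overline{\log v_k(\bs x,t)}=\int_{\Omega_1}\log v_k(\bs x,t)\D \bs x .
\end{equation*}
Differentiating this functional, the factor $v_k$ in $\pdt v_k=v_k\bigl((\bs{Av})_k-f^s(\bs v)+d_k\Delta v_k\bigr)$ cancels \emph{pointwise} under the integral, so no $1/\overline v_k$ ever appears: the Laplacian contributes only $d_k\int_{\Omega_1}\Delta v_k\D\bs x=0$ (Neumann conditions), all gradient terms enter solely through $-f^s$ with the favourable positive sign, and one gets the exact identity $\frac{\D}{\D t}\log F=\DP{\bs{Aw}}{\bs p}-\DP{\bs{Aw}}{\bs w}-\Phi(\bs V)$ with no remainder at all. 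Then \eqref{eq4:10} and \eqref{eq4:5} give $\frac{\D}{\D t}\log F\geq -m+\delta_0^2q_1(S_1)\geq 0$ once $S_1$ is small enough, so $F(\bs v)\geq F_0>0$ for all $t$; since by Jensen's inequality $F\leq P$, this also bounds your $P$, hence the means $\overline v_k$, hence $\|v_k\|_{L^2}$, from below. If you replace $P$ by $F$ your argument becomes the paper's proof; if you insist on keeping $P$, you must supply a genuine lower bound for $\|\nabla V_k\|^2/\overline v_k$ near the boundary of the simplex, which is precisely the step that is missing.
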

\begin{proof}
Consider the functional
\begin{equation}\label{eq4:11}
    F(\bs v)=\exp\left(\sum_k p_k\,\overline{\log v_k(\bs x,t)}\right),\quad \bs p\in\Int S_n,
\end{equation}
defined on the solutions to \eqref{eq1:1}--\eqref{eq1:3} with the support in $\Omega_1$ and nonzero initial conditions
$$
v_k(\bs x,0)=\varphi(\bs x)>0,\quad \bs x\in\Omega_1,\quad k=1,\ldots,n.
$$
In \eqref{eq4:11}
$$
\overline{\log v_k(\bs x,t)}=\int_{\Omega_1}\log v_k(\bs x,t)\D x.
$$
Then
\begin{equation}\label{eq4:12}
    F(\bs v)|_{t=0}=\exp\left(\sum_k p_k\,\overline{\log \varphi_k(\bs x)}\right)=F_0>0.
\end{equation}
If there exists at least one solution $v_k(\bs x,t)\to 0$ as $t\to\infty$ then $F(\bs v)\to 0$. On the other hand, from \eqref{eq4:11} and the equations of system \eqref{eq1:1} it follows
\begin{equation}\label{eq4:13}
    \frac{\D F(\bs v)}{\D t}=F(\bs v)\int_{\Omega_1}\Bigl(\DP{\bs{Av}}{\bs p}-\DP{\bs A^+\bs v}{\bs v}+\sum d_k\|\nabla v_k\|^2\Bigr)\D x.
\end{equation}
Using the representation \eqref{eq4:3}, taking into account \eqref{eq4:6} and \eqref{eq4:7}, we get for \eqref{eq4:13}
$$
\frac{\D F(\bs v)}{\D t}=F(\bs v)\Bigl(\DP{\bs{Aw}}{\bs p}-\DP{\bs{Aw}}{\bs w}-\Phi(\bs V)\Bigr),
$$
where $\Phi(\bs V)$ is given by \eqref{eq4:5}.

Using \eqref{eq4:10} and inequality \eqref{eq4:5} we find
$$
\frac{\D F(\bs v)}{\D t}\geq F(\bs v)(-m+\delta_0^2q_1(S_1)).
$$
If $S_1$ is small enough then
\begin{equation}\label{eq4:14}
    q_1(S_1)\geq \frac{m}{\delta_0^2}\,,
\end{equation}
hence $F(\bs v)\geq F_0>0$ for any $t>0$, which proves the system permanence.

\end{proof}
\section{Appendix}
In the main text we use several facts about the eigenvalues of nonnegative matrices that we prove here.
\begin{lemma}\label{lem:A:1}Let $\bs A$ be a nonnegative square matrix, for which the conditions of the Perron--Frobenius theorem hold. Let $\bs D_1$ and $\bs D_2$ be two diagonal matrices such that $0<d_j^{(1)}<d_j^{(2)}$ for all $j$. Then there exist positive $\lambda_1$ and $\lambda_2$ such that
$$
\det(\bs A-\lambda_1\bs D_1)=\det (\bs A-\lambda_2\bs D_2)=0,
$$
and in particular $$\lambda_1>\lambda_2.$$
\end{lemma}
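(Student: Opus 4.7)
The plan is to reduce the generalized eigenvalue problem $\det(\bs A-\lambda\bs D)=0$ to an ordinary eigenvalue problem and then apply the strict monotonicity of the Perron root under entrywise comparison. Since each $\bs D_j$ is diagonal with strictly positive entries, it is invertible, and we have the factorization
\begin{equation*}
\det(\bs A-\lambda\bs D_j)=\det(\bs D_j)\det(\bs D_j^{-1}\bs A-\lambda\bs I),
\end{equation*}
so the positive roots of $\det(\bs A-\lambda\bs D_j)=0$ coincide with the positive eigenvalues of the nonnegative matrix $\bs B_j:=\bs D_j^{-1}\bs A$. Multiplication by a positive diagonal matrix preserves the zero-pattern of $\bs A$, so $\bs B_j$ inherits primitivity (and in particular irreducibility) from $\bs A$. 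By the Perron--Frobenius theorem applied to each $\bs B_j$, there is a simple, strictly positive dominant eigenvalue $\lambda_j:=\rho(\bs B_j)>0$ with a positive eigenvector, establishing the existence part of the claim.

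For the strict inequality $\lambda_1>\lambda_2$, the key observation is that the hypothesis $0<d_j^{(1)}<d_j^{(2)}$ for every $j$ gives $1/d_j^{(1)}>1/d_j^{(2)}$, hence entrywise
\begin{equation*}
\bs B_1=\bs D_1^{-1}\bs A\;\geq\;\bs D_2^{-1}\bs A=\bs B_2,\qquad \bs B_1\neq\bs B_2,
\end{equation*}
with strict inequality at every position where $\bs A$ has a nonzero entry (such positions exist since $\bs A$ is primitive and hence nonzero). I would then invoke the standard strict monotonicity principle for the spectral radius: if $\bs C$ is nonnegative, $\bs B$ is nonnegative and irreducible, and $\bs C\leq\bs B$ with $\bs C\neq\bs B$, then $\rho(\bs C)<\rho(\bs B)$. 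Applied with $\bs B=\bs B_1$ and $\bs C=\bs B_2$, this gives $\lambda_2=\rho(\bs B_2)<\rho(\bs B_1)=\lambda_1$.

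The only delicate step is the strict (as opposed to weak) monotonicity; the weak version $\rho(\bs B_2)\leq\rho(\bs B_1)$ is elementary, but strictness requires irreducibility of the larger matrix and genuine entrywise increase somewhere. Both hold here: $\bs B_1$ inherits irreducibility from $\bs A$, and the difference $\bs B_1-\bs B_2=(\bs D_1^{-1}-\bs D_2^{-1})\bs A$ is nonnegative and nonzero because $\bs D_1^{-1}-\bs D_2^{-1}$ is diagonal with strictly positive entries and $\bs A\neq 0$. Since everything else reduces to cited Perron--Frobenius theory, this comparison is the main content; no further calculation is needed once it is in place.
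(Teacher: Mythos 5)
Your proof is correct and follows essentially the same route as the paper's: both reduce $\det(\bs A-\lambda\bs D_j)=0$ to the Perron root of the product of $\bs A$ with $\bs D_j^{-1}$ (the paper uses $\bs A\bs D_j^{-1}$, you use $\bs D_j^{-1}\bs A$, which has the same spectrum) and conclude via strict monotonicity of the spectral radius under entrywise domination. The only difference is that you cite the strict-monotonicity principle as a standard result while the paper sketches it directly with the Perron eigenvector; incidentally, your write-up keeps the indices consistent with the claim $\lambda_1>\lambda_2$, whereas the paper's proof states the comparison with the roles of $\bs B_1$ and $\bs B_2$ interchanged.
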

\begin{proof}Matrices $\bs B_1=\bs{AD}_1^{-1}$ and $\bs B_2=\bs{AD}_2^{-1}$ satisfy the Perron--Frobenius theorem and clearly $\bs B_1<\bs B_2$. We need to show that the spectral radius $\lambda(\bs B_1)$ of $\bs B_1$ is less than $\lambda(\bs B_2)$. But this follows from the inequality with a positive $\bs x$
$$
\bs B_2\bs x=\bs B_1\bs x+(\bs B_2-\bs B_1)\bs x>\lambda(\bs B_1)+\varepsilon \bs x
$$
and the general fact that $\bs{Ax}<\beta\bs x$ implies $\lambda(\bs A)<\beta$ (e.g., \cite{horn2012matrix}).
\end{proof}

\begin{lemma}\label{lem:A:2}Let $\bs D$ be a diagonal matrix with positive elements on the main diagonal, and let $\bs A$ be a square non-negative matrix for which the Perron--Frobenius theorem holds. If $\lambda_0$ is the dominant eigenvalue of $\bs A\bs D^{-1}$ then the eigenvalues of the matrix
$$
\bs A-\lambda\bs D
$$
with $\lambda>\lambda_0$ have negative real parts.
\end{lemma}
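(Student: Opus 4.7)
The plan is to argue by contradiction, in the same spirit as Lemma~\ref{lem:A:1}. Suppose $\mu$ is an eigenvalue of $\bs A - \lambda \bs D$ with $\Re \mu \geq 0$ for some $\lambda > \lambda_0$; the goal is to contradict the characterization of $\lambda_0$ as the spectral radius of $\bs A \bs D^{-1}$. The first step is to turn the eigenvalue condition into a fixed-point statement. Set $\bs M = \lambda \bs D + \mu \bs I$, which is diagonal with entries $\lambda d_i + \mu$ of real part $\lambda d_i + \Re \mu > 0$, so $\bs M$ is invertible. The singularity condition $\det(\bs A - \lambda \bs D - \mu \bs I) = 0$ then factorizes as $\det(\bs M)\det(\bs M^{-1}\bs A - \bs I) = 0$, so $1$ is an eigenvalue of $\bs E := \bs M^{-1}\bs A$; in particular $\rho(\bs E) \geq 1$.

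The second step is to dominate $\bs E$ entrywise by a non-negative matrix with known spectral radius. Its entries satisfy $|E_{ij}| = a_{ij}/|\lambda d_i + \mu| \leq a_{ij}/(\lambda d_i)$, using the elementary estimate $|\lambda d_i + \mu|^2 = (\lambda d_i + \Re \mu)^2 + (\Im \mu)^2 \geq (\lambda d_i)^2$. Hence $|\bs E| \leq \lambda^{-1} \bs D^{-1}\bs A$ componentwise. The classical Wielandt-type monotonicity of the spectral radius, namely $\rho(\bs B) \leq \rho(|\bs B|) \leq \rho(\bs C)$ whenever $|\bs B| \leq \bs C$ with $\bs C$ non-negative (see, e.g., Horn--Johnson, \emph{Matrix Analysis}), then gives $\rho(\bs E) \leq \lambda^{-1}\rho(\bs D^{-1}\bs A) = \lambda_0/\lambda$, where the final equality uses that $\bs D^{-1}\bs A$ and $\bs A \bs D^{-1}$ are similar via $\bs D$ and therefore share the Perron--Frobenius spectral radius $\lambda_0$.

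Combining the two bounds gives $1 \leq \rho(\bs E) \leq \lambda_0/\lambda$, which contradicts the strict hypothesis $\lambda > \lambda_0$ and closes the argument. I do not anticipate any substantial obstacle: the only non-elementary ingredient is the Wielandt-type monotonicity lemma, which is standard and coherent with the Perron--Frobenius framework already in force for $\bs A$. The only subtlety worth double-checking is that the assumption $\Re \mu \geq 0$ is non-strict, but the quadratic bound on $|\lambda d_i + \mu|$ is also non-strict, while the final contradiction is delivered by the strict inequality $\lambda > \lambda_0$, so the argument uniformly covers the boundary case $\Re \mu = 0$.
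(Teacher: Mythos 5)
Your proof is correct, but it follows a genuinely different route from the paper's. The paper argues in two steps: for $\bs D=d_0\bs I$ the eigenvalues of $\bs A-\lambda d_0\bs I$ are just the eigenvalues of $\bs A$ shifted by $-\lambda d_0$, all of which have real part at most $\mu-\lambda d_0=d_0(\lambda_0-\lambda)<0$; it then asserts that ``the same reasonings'' apply to $\bs A\bs D^{-1}$ in the general case. That second step is terse to the point of being a gap as literally written: the spectrum of $\bs A-\lambda\bs D=(\bs A\bs D^{-1}-\lambda\bs I)\bs D$ is \emph{not} obtained from the spectrum of $\bs A\bs D^{-1}-\lambda\bs I$ by any simple transformation, and in general a Hurwitz matrix multiplied by a positive diagonal matrix need not stay Hurwitz (D-stability is strictly stronger than stability); the paper's claim is salvageable only because $\bs A-\lambda\bs D$ has non-negative off-diagonal entries, i.e., via the Metzler/M-matrix structure. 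Your argument bypasses this issue entirely: assuming an eigenvalue $\mu$ with $\Re\mu\geq 0$, the factorization $\bs A-\lambda\bs D-\mu\bs I=\bs M(\bs M^{-1}\bs A-\bs I)$ with $\bs M=\lambda\bs D+\mu\bs I$ forces $1$ to be an eigenvalue of $\bs E=\bs M^{-1}\bs A$, while the entrywise bound $|\bs E|\leq\lambda^{-1}\bs D^{-1}\bs A$ together with Wielandt monotonicity and the similarity $\bs A\bs D^{-1}=\bs D(\bs D^{-1}\bs A)\bs D^{-1}$ gives $\rho(\bs E)\leq\lambda_0/\lambda<1$, a contradiction. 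All the individual steps check out, including the handling of the boundary case $\Re\mu=0$, where the non-strict estimate $|\lambda d_i+\mu|\geq\lambda d_i$ suffices because the strictness comes from $\lambda>\lambda_0$. So your proof is not only valid but more complete and self-contained than the one in the paper; the price is invoking the standard spectral-radius monotonicity facts from Horn--Johnson, whereas the paper's (sketchy) argument uses nothing beyond Perron--Frobenius itself.
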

\begin{proof}The proof is straightforward in the case when $\bs D=\diag(d_0,\ldots,d_0)$. In this case the dominant eigenvalue $\mu$ of $\bs A$ is related to $d_0$ as $\mu=\lambda_0d_0$. Hence if $\lambda>\lambda_0$ then $\mu-\lambda d_0<0$ and all the eigenvalues of $\bs A-\lambda \bs D$ will have negative real parts. In the general case the same reasonings are used for the matrix $\bs A\bs D^{-1}$.
\end{proof}
\paragraph{Acknowledgements:} ASB and VPP are supported in part by the Russian Foundation for Basic Research grant \#13-01-00779.

%\bibliography{prebiotic}

\end{document}